\newtheorem{theorem}{Theorem}
\newtheorem{definition}[theorem]{Definition}
\newtheorem{proposition}[theorem]{Proposition}
\begin{document}
\title{Epsilon-smooth measure of coherence}

\author{Zhengjun Xi}
\email{xizhengjun@snnu.edu.cn}
\affiliation{College of Computer Science, Shaanxi Normal University, Xi'an 710062,
China}
\author{Shanshan Yuwen}
\affiliation{College of Computer Science, Shaanxi Normal University, Xi'an 710062,
China}
\date{\today}

\begin{abstract}
In this paper, by minimizing the coherence quantifiers over all states in an $\epsilon$ ball around a given state, we define a generalized smooth quantifier, called the $\epsilon$-smooth measure of coherence. We use it to estimate the difference between the expected state and the actually prepared state and quantify quantum coherence contained in an actually prepared state, and it can been interpreted as the minimal coherence guaranteed to present in an $\epsilon$ ball around given quantum state.  We find that the $\epsilon$-smooth measure of any coherence monotone is still a coherence monotone, but it does not
satisfy monotonicity on average under incoherent operations. We show the $\epsilon$-smooth measure of coherence is continuous even if the original coherence quantifier is not. We also study the $\epsilon$-smooth measure of distance-based coherence quantifiers, and some interesting properties are given. Moreover, we discuss the dual form of the $\epsilon$-smooth measure of coherence by maximizing over all states in an $\epsilon$ ball around the given state and show that the dual $\epsilon$-smooth measure of coherence provides an upper bound of one-shot coherence distillation.
\end{abstract}
\maketitle
\section{Introduction}
Quantum resource theory is an extraordinary framework to build a rigorous quantification of distinct features of
quantum information theory~\cite{BrabdaoPRL15,CoeckeIC16,RegulaJPA18,Chitambararxiv18}.
The coherent superposition is an essential ingredient for entanglement and for several quantum features,
and it does not present in classical physics~\cite{VedralPRL97,PlenioQIC07,HorodeckiRMP09}. The resource theory of quantum coherence is a vivid
research topic~\cite{Aberg06,BaumgratzPRL14}, and related theories and various applications in this direction
have also been investigated~\cite{AdessoJPA16,StreltsovRMP17,HuPR18}.

In entanglement theory, the $\epsilon$ measure of entanglement
depends on a precision parameter $\epsilon$, and it quantifies the entanglement
contained in a state which is only partially known~\cite{MoraNJP08}, and it and its variant have been applied and further developed
~\cite{FernandoCMP10,MoraPRA10,BuscemiPRL11,NestPRL13,DattaIEEEIT09a,DattaIEEEIT09b,KoingIEEEIT09,DattaIEEEIT11,BuscemiIEEEIT13,WildeIEEEIT17}. We know that the coherence embodies the essence of entanglement in the multipartite system, that is to say, we can also associate to every coherence measure a family of measures which depend on parameter $\epsilon$. The $\epsilon$ version of every coherence quantifier may be actually considered as a smoothed version of the coherence quantifiers. Thus, we can consider $\epsilon$ smooth for every coherence measure by either minimizing or maximizing the measure over all states in an $\epsilon$ ball around a given state.
There are two major motivations for studying the $\epsilon$-smooth measures of coherence.

First, we need to prepare the high-quality coherence states in order to achieve useful quantum information processing tasks, for example, quantum algorithms, quantum metrology, and so on.
In fact, we know that any preparation apparatus has realistically only a certain degree of precision and reliability, that is to say,
there is a certain distance between the expected state and the actually prepared state. Then, we can estimate their difference and quantify the coherence contained in actually prepared state. The $\epsilon$-smooth measure of coherence can provide a bound of the coherence prepared in a state with some threshold values $\epsilon$.
If one takes the minimum over all states in an $\epsilon$ ball, then the $\epsilon$-smooth measure of coherence aims to characterize the minimum guaranteed amount of coherence,
given the promises that the state which has actually been prepared is within a distance $\epsilon$ from the expected state.

Another reason for studying the $\epsilon$ measure of coherence is that not long after Baumgratz, Cramer, and Plenio proposed the resource theory of coherence~\cite{BaumgratzPRL14}, Winter and Yang established an operational theory of coherence by introducing the coherence distillation and the coherence cost (coherence dilution)~\cite{WinterPRL16}.
In general, the distillable coherence cannot be larger than the coherence cost,
since the resources are finite and the number of
independent and identically distributed prepared states is necessarily limited.
More importantly, it is very hard to perform coherent state manipulations over a very large numbers of systems.
Therefore, it becomes crucial to be able to characterize
how well one can distill maximally coherent states from prepared states, or how well one can dilute a unit resource state to the target state.
The study of such nonasymptotic and especially one-shot scenarios has garnered great interest in quantum coherence, in which one has access only to a single copy of a quantum system and
allows for a finite accuracy, reflecting the realistic restrictions on state transformations. Zhao $et$ $al$. established the one-shot theory of coherence dilution~\cite{ZhaoPRL18}, and Regula $et$ $al$. characterized the distillation of quantum coherence in the one-shot setting~\cite{RegulaPRL18}. Recently, Zhao $et$ $al$. gave a full review for the problem of one-shot coherence
distillation under four classes of incoherent operations~\cite{Winter18}.
During the same period, several authors also discuss the one-shot scenarios of assisted distillable coherence~\cite{HsiehJPA18,RegulaPRA18}.
We know that the smooth minimum and maximum relative entropies of coherence
play a significant role in the one-shot coherence
distillation and cost~\cite{ZhaoPRL18,RegulaPRL18,Winter18,HsiehJPA18,RegulaPRA18},
In particular, Bu $et$ $al$. investigated the minimum and maximum relative entropies
and gave an operational interpretation of the maximum relative entropy of coherence~\cite{BuPRL17}.

The paper is organized as follows. We give the definition of the $\epsilon$-smooth measure of coherence by minimizing the coherence measures over all states in an $\epsilon$ ball around a given state in Sec.~\ref{sec:definition},
and in Sec.~\ref{sec:prop} we discuss the properties of the $\epsilon$-smooth measure of coherence. In Sec.~\ref{sec:distance}, we provide the lower and upper bounds which establish a relation with the distance-based coherence quantifiers. We discuss the dual form of the $\epsilon$-smooth measure of coherence in Sec.~\ref{sec:MAX}. Sec.~\ref{sec:conclusion} is devoted to our conclusions.

\section{The Definitions of $\epsilon$-smooth measures of coherence}\label{sec:definition}
We briefly give an account of the concepts that are required to derive our main results.
We are concerned with the resource theory of coherence by described in Refs.~\cite{BaumgratzPRL14,StreltsovRMP17}.
Consider a finite $d$-dimensional
Hilbert space $\mathcal{H}$ with a fixed reference basis $\{|i\rangle\}_{i=1}^d$, in which the set of incoherent
states $\mathcal{I}$ is defined as the set of all the states of the form,
\begin{equation}
\delta=\sum_i\delta_i|i\rangle\langle i|,
\end{equation}
where $\delta_i$ are probabilities, and $\sum_i\delta_i=1$. Let $\mathcal{D}$ be the convex set of density operators acting on Hilbert space $\mathcal{H}$; we then have $\mathcal{I}\subset\mathcal{D}$.
Any state which cannot be written as above is defined as a coherent state,
which means the coherence is basis dependent.
Baumgratz \emph{et al.} proposed that any proper measure of the coherence $\mathcal{C}$ must satisfy the following conditions~\cite{BaumgratzPRL14}:
\begin{itemize}
\item [{(C1)}]  $Non-negativity:$ $\mathcal{C}(\rho)\geqslant 0$ for all quantum states $\rho$, and $\mathcal{C}(\rho)=0$ if and only if $\rho\in \mathcal{I}$.
\item [{(C2)}]  $Monotonicity:$ $\mathcal{C}(\rho)$ is nonincreasing under incoherent operation $\Lambda$, i.e.,
$\mathcal{C}(\rho) \geqslant \mathcal{C}(\Lambda(\rho))$.
\item [{(C3)}]  $Strong$ $monotonicity:$ $\mathcal{C}(\rho)$ is nonincreasing on average under selective incoherent operations, i.e., $\mathcal{C}(\rho)\geqslant \sum_k p_k\mathcal{C}(\rho_k) $, where $\rho_k=K_k\rho K_k^\dag/p_k$ and $p_k=\mathrm{Tr}(K_k \rho K_k^\dag)$ for all $\{K_k\}$ with $\sum_k K_k^{\dagger}K_k= I$ and $K_k \mathcal{I} K_k^\dagger\subseteq\mathcal{I}$.
\item [{(C4)}]  $Convexity:$ $\mathcal{C}(\rho)$ is a convex function of quantum states, i.e.,
$\sum_ip_i \mathcal{C}(\rho_i)\geqslant \mathcal{C}(\sum_ip_i\rho_i)$
 for any ensemble $\{p_i,\rho_i\}$.
\end{itemize}

Note that conditions (C3) and (C4) automatically
imply condition (C2).
It has been shown that the relative entropy and $l_1$-norm satisfy all conditions~\cite{BaumgratzPRL14}.
Recently, we have found that the measure of coherence induced by the fidelity
does not satisfy condition (C3), and an explicit example is presented~\cite{XiPRA15}. In general, if a quantity $\mathcal{C}$ fulfills condition
(C1) and either condition (C2) or  (C3) (or both), we say this is a coherence monotone, and if a quantity $\mathcal{C}$
fulfills conditions (C1)-(C4), we say this is a coherence measure. For more discussion we refer to two reviews~\cite{HuPR18,StreltsovRMP17}.

We know that the distance $D$ is a very important tool in metric theory, and it is applied in many respects in information theory~\cite{Nielsenbook,Wildebook}. The distance needs to satisfy non-negative, symmetric, and triangle inequality. We know that some functions do not satisfy the fundamental conditions of the distance, but they are still widely used in information theory, e.g. the relative entropy. In this paper, we require that the distance $D$ is also convex or jointly convex, that is,
\begin{equation}
D\left(\sum_i\lambda_i\rho_i, \sum_i\lambda_i\tau_i\right)\leqslant \sum_i\lambda_iD(\rho_i,\tau_i),
\end{equation}
where $\lambda_i\geq 0$ and $\sum_i\lambda_i=1$.
 Furthermore, we also require that the distance $D$ is contractive under completely positive and trace-preserving operations~$\Phi$, that is,
 \begin{equation}
 D(\Phi(\rho),\Phi(\tau))\leqslant D(\rho,\tau).
 \end{equation}
 Clearly, the trace distance meets our all requirements~\cite{Nielsenbook,Wildebook}. It is defined by
 \begin{equation}
 D_{tr}(\rho,\tau)=\frac{1}{2}\mathrm{Tr}|\rho-\tau|,
 \end{equation}
 where $|X|=\sqrt{X^\dagger X}$.
With the definition of coherence measure at hand, we present the definition of the $\epsilon$ smooth to all coherence measures by minimizing over all states in an $\epsilon$ ball around $\rho$ as follows.

\begin{definition}\label{epsilon-m-coh}
For every coherence measure $\mathcal{C}$, and any $\epsilon\geqslant 0$, and fixed distance $D$, the $\epsilon$-smooth measures of coherence are defined as
\begin{equation}
\mathcal{C}_{\epsilon,\min}^{(D)}(\rho)=\min_{\tau\in B_{\epsilon}^{(D)}(\rho)}\mathcal{C}(\tau),
\end{equation}
where $B_{\epsilon}(\rho)=\{\tau|D(\rho,\tau)\leqslant\epsilon\}$.
\end{definition}

Note that the distance and the measure function of coherence may be the same, or they can also be different.
Henceforth, we will omit the superscript $D$ to avoid unnecessary misunderstands.
Intuitively, the $\epsilon$-smooth measure of coherence $\mathcal{C}_{\epsilon,\min}$ describes the minimal amount of coherence guaranteed to present in an $\epsilon$ ball around the given quantum state $\rho$. This construction gives a technique to obtain a smooth
function even when the original measure $\mathcal{C}$ is not. This is consistent with the entanglement case~\cite{MoraNJP08} and it is similar to the smooth version of coherence in the one-shot coherence scenarios~\cite{BuPRL17,ZhaoPRL18,RegulaPRL18,Winter18,HsiehJPA18,RegulaPRA18}.

Obviously, for $\epsilon=0$, they reduce to the nonsmooth cases, we have
$\mathcal{C}_{0,\min}(\rho)=\mathcal{C}(\rho)$.
 If the parameter $\epsilon$ is so large that there is a incoherent state in $B_{\epsilon}(\rho)$, we find that $\mathcal{C}_{\epsilon,\min}(\rho)$ vanishes on all the set of states $\mathcal{D}$. In general, for a fixed coherent state $\rho$, we are typically interested in values of $\epsilon$ that are
smaller than the distance of the state $\rho$ from the set of coherent states, that is, $\epsilon\leqslant D(\rho,\delta)$, where $\delta\in \mathcal{I}$.
Thus, we can give an order relation of the $\epsilon$-smooth measure of coherence via the different parameters.
\begin{proposition}\label{pror:order_p}
For any $\epsilon>\epsilon^\prime\geqslant0$,
we have
\begin{equation}
0\leqslant\mathcal{C}_{\epsilon^\prime,\min}(\rho)-\mathcal{C}_{\epsilon,\min}(\rho)\leqslant \frac{\epsilon-\epsilon^\prime}{\epsilon}\mathcal{C}(\rho).
\end{equation}
\end{proposition}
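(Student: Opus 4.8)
The plan is to prove the two inequalities by separate arguments: the left-hand bound is an immediate consequence of a set inclusion, while the right-hand bound follows from a single interpolation step that exploits the convexity of both $D$ and $\mathcal{C}$.

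For the left inequality I would note that $\epsilon'\leqslant\epsilon$ implies $B_{\epsilon'}(\rho)\subseteq B_{\epsilon}(\rho)$, so minimizing the same functional $\mathcal{C}$ over the larger ball yields a value no larger than the minimum over the smaller one; that is, $\mathcal{C}_{\epsilon,\min}(\rho)\leqslant\mathcal{C}_{\epsilon',\min}(\rho)$. Nothing beyond Definition~\ref{epsilon-m-coh} is needed here.

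For the upper bound, observe that $\epsilon>\epsilon'\geqslant0$ forces $\epsilon>0$, so $\lambda:=\epsilon'/\epsilon\in[0,1)$ is well defined. Let $\tau\in B_{\epsilon}(\rho)$ be an optimal state for the $\epsilon$ problem, i.e.\ $\mathcal{C}(\tau)=\mathcal{C}_{\epsilon,\min}(\rho)$, and consider the interpolated state $\sigma:=\lambda\tau+(1-\lambda)\rho$. Two facts then need to be checked. First, $\sigma$ still lies in the smaller ball: writing $\rho=\lambda\rho+(1-\lambda)\rho$ and using joint convexity of $D$ together with $D(\rho,\rho)=0$,
\begin{equation}
D(\rho,\sigma)\leqslant\lambda D(\rho,\tau)+(1-\lambda)D(\rho,\rho)=\lambda D(\rho,\tau)\leqslant\lambda\epsilon=\epsilon',
\end{equation}
so $\sigma\in B_{\epsilon'}(\rho)$. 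Second, by convexity (C4) of $\mathcal{C}$ one has $\mathcal{C}(\sigma)\leqslant\lambda\mathcal{C}(\tau)+(1-\lambda)\mathcal{C}(\rho)=\lambda\,\mathcal{C}_{\epsilon,\min}(\rho)+(1-\lambda)\mathcal{C}(\rho)$. Combining the two facts and subtracting $\mathcal{C}_{\epsilon,\min}(\rho)$,
\begin{equation}
\begin{split}
\mathcal{C}_{\epsilon',\min}(\rho)-\mathcal{C}_{\epsilon,\min}(\rho)
&\leqslant\mathcal{C}(\sigma)-\mathcal{C}_{\epsilon,\min}(\rho)\\
&\leqslant(1-\lambda)\bigl(\mathcal{C}(\rho)-\mathcal{C}_{\epsilon,\min}(\rho)\bigr)\\
&\leqslant\frac{\epsilon-\epsilon'}{\epsilon}\,\mathcal{C}(\rho),
\end{split}
\end{equation}
where the last step uses $\mathcal{C}_{\epsilon,\min}(\rho)\geqslant0$ from (C1). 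This is exactly the claimed bound.

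The only point requiring care is whether the minimum defining $\mathcal{C}_{\epsilon,\min}(\rho)$ is genuinely attained when the underlying measure $\mathcal{C}$ is not continuous; I would deal with this either by invoking compactness of $B_{\epsilon}(\rho)$ (so that a minimizer exists whenever $\mathcal{C}$ is lower semicontinuous) or, to be fully safe, by running the same argument with a near-optimal $\tau$ satisfying $\mathcal{C}(\tau)\leqslant\mathcal{C}_{\epsilon,\min}(\rho)+\eta$ and letting $\eta\to0^{+}$ at the end. Apart from that, the argument is essentially the one interpolation line above, so I do not anticipate any substantial obstacle.
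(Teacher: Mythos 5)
Your proof is correct and follows essentially the same route as the paper: the set inclusion $B_{\epsilon'}(\rho)\subseteq B_{\epsilon}(\rho)$ for the left inequality, and the interpolated state $\lambda\tau+(1-\lambda)\rho$ with $\lambda=\epsilon'/\epsilon$, joint convexity of $D$, and convexity of $\mathcal{C}$ for the right one. Your version is marginally more careful than the paper's (the intermediate bound $(1-\lambda)(\mathcal{C}(\rho)-\mathcal{C}_{\epsilon,\min}(\rho))$ is slightly tighter, and you address attainment of the minimum, which the paper tacitly assumes), but the argument is the same.
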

\begin{proof}
Since $\epsilon>\epsilon^\prime\geqslant0$, then we have $B_{\epsilon^\prime}(\rho)\subseteq B_{\epsilon}(\rho)$, this implies that
\begin{equation}\label{eq:bound_ccc}
\mathcal{C}_{\epsilon,\min}(\rho)\leqslant\mathcal{C}_{\epsilon^\prime,\min}(\rho).
\end{equation}
Then we will prove the rightmost hand of inequality.
Suppose that the state $\tau\in B_{\epsilon}(\rho)$ such that $\mathcal{C}_{\epsilon,\min}(\rho)=\mathcal{C}(\tau)$, then we denote $\lambda=\epsilon^\prime/\epsilon$ and take the state $\tau_{\lambda}=\lambda\tau+(1-\lambda)\rho$. Since the distance $D$ is jointly convex, then we obtain
\begin{equation}
D(\rho,\tau_\lambda)\leqslant \lambda D(\rho,\tau)\leqslant \epsilon^\prime.
\end{equation}
This implies that the state $\tau_\lambda\in B_{\epsilon^\prime}(\rho)$. Further, we have
\begin{align}
\mathcal{C}_{\epsilon^\prime,\min}(\rho)&\leqslant \lambda\mathcal{C}(\tau)+(1-\lambda)\mathcal{C}(\rho)\nonumber\\
&\leqslant \mathcal{C}_{\epsilon,\min}(\rho)+(1-\lambda)\mathcal{C}(\rho).
\end{align}
Thus we have
\begin{equation}\label{eq:bound_cccc}
\mathcal{C}_{\epsilon^\prime,\min}(\rho)-\mathcal{C}_{\epsilon,\min}(\rho)\leqslant \frac{\epsilon-\epsilon^\prime}{\epsilon}\mathcal{C}(\rho).
\end{equation}
Combining Eq.~(\ref{eq:bound_ccc}) with Eq.~(\ref{eq:bound_cccc}), we obtain the desired result.
\end{proof}
\section{Basic Properties}\label{sec:prop}
In this section, we will list some basic properties of the $\epsilon$-smooth measure of coherence.
\begin{proposition}\label{pror:order_p2}
For any quantum state $\rho$, $\mathcal{C}_{\epsilon,\min}(\rho)$ is non-increasing under incoherent operation $\Lambda$, i.e.,
\begin{equation}
\mathcal{C}_{\epsilon,\min}(\rho) \geqslant \mathcal{C}_{\epsilon,\min}(\Lambda(\rho)).
\end{equation}
\end{proposition}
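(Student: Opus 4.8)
The plan is to show that any incoherent operation $\Lambda$ maps the $\epsilon$-ball around $\rho$ into the $\epsilon$-ball around $\Lambda(\rho)$, and then use monotonicity of the original measure $\mathcal{C}$ together with the fact that minimizing over a larger set can only decrease the value. Concretely, let $\tau^\star \in B_\epsilon(\rho)$ be a minimizer, so $\mathcal{C}_{\epsilon,\min}(\rho) = \mathcal{C}(\tau^\star)$ and $D(\rho,\tau^\star)\leqslant\epsilon$. Since an incoherent operation $\Lambda$ is in particular completely positive and trace-preserving, the contractivity assumption on $D$ gives
\begin{equation}
D(\Lambda(\rho),\Lambda(\tau^\star))\leqslant D(\rho,\tau^\star)\leqslant\epsilon,
\end{equation}
so $\Lambda(\tau^\star)\in B_\epsilon(\Lambda(\rho))$.

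Next I would invoke the definition of $\mathcal{C}_{\epsilon,\min}$ at the state $\Lambda(\rho)$: since $\Lambda(\tau^\star)$ is a feasible point of the minimization, we have
\begin{equation}
\mathcal{C}_{\epsilon,\min}(\Lambda(\rho))\leqslant\mathcal{C}(\Lambda(\tau^\star)).
\end{equation}
Finally, because $\mathcal{C}$ is a coherence measure it satisfies monotonicity (C2), hence $\mathcal{C}(\Lambda(\tau^\star))\leqslant\mathcal{C}(\tau^\star)=\mathcal{C}_{\epsilon,\min}(\rho)$. Chaining the two inequalities yields $\mathcal{C}_{\epsilon,\min}(\Lambda(\rho))\leqslant\mathcal{C}_{\epsilon,\min}(\rho)$, which is the claim.

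There is essentially no hard part here; the proof is a two-line argument once the right objects are named. The only point that deserves care is making sure the minimum in the definition is attained, so that picking a specific minimizer $\tau^\star$ is legitimate — this is fine because the ball $B_\epsilon(\rho)$ is a closed (indeed compact) subset of the compact set $\mathcal{D}$ and $\mathcal{C}$ is assumed continuous (or one argues with an infimizing sequence and passes to the limit). One should also note explicitly that incoherent operations are a subclass of CPTP maps, so the contractivity hypothesis on $D$ applies; this is exactly the structural reason the smoothing respects monotonicity. No convexity of $\mathcal{C}$ or of $D$ is needed for this particular proposition.
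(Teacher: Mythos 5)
Your proof is correct and follows essentially the same route as the paper: pick a minimizer $\tau^\star$, use contractivity of $D$ under the (CPTP) incoherent operation to place $\Lambda(\tau^\star)$ in $B_\epsilon(\Lambda(\rho))$, then chain feasibility with monotonicity (C2) of $\mathcal{C}$. Your added remark on attainment of the minimum is a reasonable point of care but does not change the argument.
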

\begin{proof}
Suppose that the state $\tau^*\in B_{\epsilon}(\rho)$ realizes the minimum for the associated $\epsilon$ measure $\mathcal{C}_{\epsilon,\min}(\rho)$, i.e.,
\begin{equation}\label{sec:P_eq1}
\mathcal{C}_{\epsilon,\min}(\rho)=\mathcal{C}(\tau^*).
\end{equation}
Since the distance $D$ is contractive under incoherent operation $\Lambda$, it follows that
\begin{equation}
D(\Lambda(\rho),\Lambda(\tau^*))\leqslant D(\rho,\tau^*)\leqslant \epsilon.
\end{equation}
Thus we have
\begin{align}
\mathcal{C}(\tau^*)&\geqslant\mathcal{C}(\Lambda(\tau^*))\nonumber\\
&\geqslant \min_{\hat{\tau}\in B_{\epsilon}(\Lambda(\rho))}\mathcal{C}(\hat{\tau})\nonumber\\
&=\mathcal{C}_{\epsilon,\min}(\Lambda(\rho)).
\end{align}
From Eq.~(\ref{sec:P_eq1}), we obtain the desired result.
\end{proof}
From Propositions~\ref{pror:order_p} and~\ref{pror:order_p2}, we know that the $\epsilon$-smooth minimum measure of coherence $\mathcal{C}_{\epsilon,\min}$ satisfies conditions (C1) and (C2), which implies that the $\epsilon$-smooth measure of coherence $\mathcal{C}_{\epsilon,\min}$ is a coherence monotone.
But the following result shows that this measure does not satisfy condition (C3).
\begin{proposition}
The $\epsilon$-smooth measure of coherence $\mathcal{C}_{\epsilon,\min}$ does not satisfy strong monotonicity .
\end{proposition}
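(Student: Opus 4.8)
The plan is to exhibit an explicit counterexample: a single coherent state $\rho$, a set of incoherent Kraus operators $\{K_k\}$, and a threshold $\epsilon$ for which the defining inequality of (C3),
\begin{equation}
\mathcal{C}_{\epsilon,\min}(\rho)\geqslant \sum_k p_k\,\mathcal{C}_{\epsilon,\min}(\rho_k),
\end{equation}
fails, where $\rho_k=K_k\rho K_k^\dagger/p_k$. The natural arena is a qubit with the $l_1$-norm measure (or relative entropy) as the underlying $\mathcal{C}$ and the trace distance as $D$, since for these everything is computable in closed form. First I would pick $\rho$ to be a pure coherent qubit state, say with off-diagonal weight $c=\mathcal{C}_{l_1}(\rho)$, and choose a selective incoherent operation — for instance one built from $K_1=\mathrm{diag}(1,t)$, $K_2=\mathrm{diag}(0,\sqrt{1-t^2})$ up to relabeling, or a genuine ``unpacking'' operation that splits $\rho$ into branches $\rho_k$ with $\sum_k p_k\mathcal{C}(\rho_k)=\mathcal{C}(\rho)$ (equality in (C3) for the bare measure). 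The point is that smoothing is \emph{not} linear: $\mathcal{C}_{\epsilon,\min}(\rho)$ subtracts essentially $\epsilon$ worth of coherence from one state, while $\sum_k p_k\mathcal{C}_{\epsilon,\min}(\rho_k)$ subtracts an $\epsilon$-ball's worth from \emph{each} branch, and if the branches are ``smaller'' (low weight $p_k$ but not-too-small coherence) the right-hand side can be driven down much faster than the left — or, conversely, one arranges that a branch with large $p_k$ already has zero smoothed coherence because its own coherence is below $\epsilon$, yet the weighted average of the branch coherences still matched $\mathcal{C}(\rho)$ before smoothing, so that $\mathcal{C}_{\epsilon,\min}(\rho)>0=\sum_k p_k\mathcal{C}_{\epsilon,\min}(\rho_k)$ is impossible and we instead need the reverse violation.

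Concretely, the cleanest route is the reverse one: engineer the situation so that smoothing kills the coherence of the \emph{pre-measurement} state but not enough of the branches. Take $\rho$ with small coherence $c$ and $\epsilon$ slightly below $c$ (so $\mathcal{C}_{\epsilon,\min}(\rho)$ is small but positive, by Proposition~\ref{pror:order_p}), and let the incoherent operation be one of the standard coherence-nonincreasing branchings for which, say, one branch $\rho_1$ occurs with high probability $p_1$ and has \emph{larger} coherence than $\rho$ itself — this is allowed under selective incoherent operations even though the average cannot increase. Then $\mathcal{C}_{\epsilon,\min}(\rho_1)$ is bounded below by $\mathcal{C}(\rho_1)-\epsilon$-type quantities that stay comparatively large, and the weighted sum $\sum_k p_k\mathcal{C}_{\epsilon,\min}(\rho_k)$ exceeds $\mathcal{C}_{\epsilon,\min}(\rho)$. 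To make this rigorous I would compute, for a qubit pure state with Bloch-sphere coherence $c$, the exact value $\mathcal{C}_{\epsilon,\min}(\rho)=c-\epsilon$ for $\epsilon\le c$ (the trace-distance ball of radius $\epsilon$ around a Bloch vector just shrinks the vector, and the minimizing $\tau$ moves radially toward the incoherent axis), plug in the branch states produced by a concrete Kraus decomposition, and verify the numerical inequality.

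The main obstacle I anticipate is not conceptual but bookkeeping: I must (i) choose an incoherent operation whose branches are easy to write down and whose coherences I can evaluate, and (ii) verify that the trace-distance-$\epsilon$-ball optimization for each branch state really does give the closed form I expect — in particular that the minimizer stays \emph{inside} the Bloch ball and that the ball doesn't already contain an incoherent state (which would collapse $\mathcal{C}_{\epsilon,\min}$ to zero and potentially kill the example in the wrong direction). A safe hedge is to keep $\epsilon$ genuinely small relative to every branch's distance from the incoherent axis, so that all the smoothed values are in their ``linear'' regime $\mathcal{C}(\cdot)-\epsilon$, whereupon the claimed violation reduces to the elementary observation that for $K$ branches, $\sum_k p_k(\mathcal{C}(\rho_k)-\epsilon)=\mathcal{C}(\rho)-\epsilon$ only when $\sum_k p_k=1$, which it is, so in that regime (C3) for $\mathcal{C}$ passes to $\mathcal{C}_{\epsilon,\min}$ — meaning the violation must come precisely from pushing one branch into the nonlinear ($=0$) regime while $\rho$ stays in the linear regime, or vice versa. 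I would therefore deliberately tune $\epsilon$ to straddle the coherence of one low-weight branch: that branch contributes $0$ to the sum after smoothing but contributed a positive amount $p_j\mathcal{C}(\rho_j)$ before, so the inequality $\mathcal{C}_{\epsilon,\min}(\rho)\ge\sum_k p_k\mathcal{C}_{\epsilon,\min}(\rho_k)$ becomes $\mathcal{C}(\rho)-\epsilon\ge \mathcal{C}(\rho)-\epsilon - p_j\mathcal{C}(\rho_j) + (\text{correction})$ — which is the \emph{wrong} direction for a counterexample — so in fact the sharp version of the example needs a branch with coherence just \emph{above} $\epsilon$, occurring with small weight, whose smoothed value $\mathcal{C}(\rho_j)-\epsilon$ is then tiny but whose ``expected'' contribution $p_j(\mathcal{C}(\rho_j)-\epsilon)$ is negligible, while $\rho$ sits where a \emph{nonconvex} shortcut in the trace-distance ball lets $\mathcal{C}_{\epsilon,\min}(\rho)$ dip below $\mathcal{C}(\rho)-\epsilon$. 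This last point — that the trace-distance ball around a mixed qubit state can be elongated enough that $\mathcal{C}_{\epsilon,\min}(\rho)<\mathcal{C}(\rho)-\epsilon$ strictly, unlike the pure case — is the real engine of the counterexample, and nailing down that strict inequality for an explicitly chosen mixed $\rho$ is where I expect to spend the bulk of the work.
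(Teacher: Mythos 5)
Your proposal does not actually contain a counterexample; it is a survey of candidate mechanisms, most of which you correctly rule out, ending with the hard step explicitly deferred. Worse, the mechanism you finally identify as ``the real engine'' --- that for a mixed qubit the trace-distance ball is elongated enough that $\mathcal{C}_{\epsilon,\min}(\rho)<\mathcal{C}_{l_1}(\rho)-\text{const}\cdot\epsilon$ strictly --- does not exist. In the Bloch representation the trace-distance ball of radius $\epsilon$ is a round Euclidean ball of radius $2\epsilon$ about $\vec r$ (for mixed states just as for pure ones), and the $l_1$ coherence is the transverse Bloch component $\sqrt{r_x^2+r_y^2}$, so for every qubit state $\mathcal{C}_{\epsilon,\min}(\rho)=\max\{\mathcal{C}_{l_1}(\rho)-2\epsilon,0\}$ exactly. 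Chasing a strict-inequality ``shortcut'' for a mixed $\rho$ would therefore fail. The one working mechanism is the piece you brushed past: make the smoothed coherence of the \emph{input} collapse to zero because an incoherent state lies inside $B_\epsilon(\rho)$, while some measurement branch keeps strictly positive smoothed coherence. (In a single qubit this forces a branch with coherence strictly exceeding $\mathcal{C}_{l_1}(\rho)$, which selective incoherent operations do allow, e.g.\ $K_1=\mathrm{diag}(1,t)$ acting on a population-imbalanced state; but you neither choose such an operation nor verify the numbers.)

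The paper realizes this mechanism much more cleanly with a flag construction that avoids all qubit bookkeeping: take $\rho=\eta\,|0\rangle\langle 0|\otimes\rho_c+(1-\eta)\,|1\rangle\langle 1|\otimes\delta$ with $\rho_c$ coherent, $\delta$ incoherent, and $\eta\leqslant\epsilon$. Then $D_{tr}(\rho,|1\rangle\langle 1|\otimes\delta)=\eta\leqslant\epsilon$, so the incoherent state $|1\rangle\langle 1|\otimes\delta$ lies in $B_\epsilon(\rho)$ and $\mathcal{C}_{\epsilon,\min}(\rho)=0$; yet the incoherent flag measurement $\{|0\rangle\langle 0|\otimes I,\ |1\rangle\langle 1|\otimes I\}$ produces the branch $|0\rangle\langle 0|\otimes\rho_c$ with probability $\eta>0$ and $\mathcal{C}_{\epsilon,\min}(|0\rangle\langle 0|\otimes\rho_c)=\mathcal{C}_{\epsilon,\min}(\rho_c)>0$, so the post-measurement average is $\eta\,\mathcal{C}_{\epsilon,\min}(\rho_c)>0$. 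To repair your write-up you would either need to adopt this flagged example or complete the qubit computation along the lines above; as it stands the proof is not there.
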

\begin{proof}
Let us suppose a state $\rho$, and
\begin{equation}
\rho=\eta|0\rangle\langle 0|\otimes \rho_{c}+(1-\eta)|1\rangle\langle 1|\otimes \delta,
\end{equation}
where $\eta\in(0,1]$, $|0\rangle$ and $|1\rangle$ are orthogonal states of a local qubit system. Here, we require that $\mathcal{C}_{\epsilon,\min}(\rho_{c})>0$. Note that the state $\delta$ is an incoherent state, where we have $\mathcal{C}_{\epsilon,\min}(\delta)=0$.
Without loss of generality, if one considers the trace distance, we have $D_{tr}(\rho,|1\rangle\langle 1|\otimes \delta)=\eta$.
Thus we say that the parameter $\eta$ is small enough so that the incoherent state $|1\rangle\langle 1|\otimes \delta\in B_{\epsilon}(\rho)$, which implies that the following equality holds, i.e.,
 \begin{equation}\label{eq:strong_m1}
\mathcal{C}_{\epsilon,\min}(\rho)=0.
\end{equation}
Then one can perform an incoherent operation with the Kraus elements $|0\rangle\langle 0|\otimes I$ and $|1\rangle\langle 1|\otimes I$ on the state $\rho$, the two outputs of such measurement are $|0\rangle\langle 0|\otimes \rho_{c}$ and $|1\rangle\langle 1|\otimes \delta $ with probabilities $\eta$ and $1-\eta$, respectively. Then, we have
\begin{align}\label{eq:strong_m2}
\eta\mathcal{C}&_{\epsilon,\min}(|0\rangle\langle 0|\otimes\rho_{c})+(1-\eta)\mathcal{C}_{\epsilon,\min}(|1\rangle\langle 1|\otimes\delta)\nonumber\\
&=\eta\mathcal{C}_{\epsilon,\min}(\rho_{c})+(1-\eta)\mathcal{C}_{\epsilon,\min}(\delta)\nonumber\\
&=\eta\mathcal{C}_{\epsilon,\min}(\rho_{c}).
\end{align}
The above equality together with the fact the $\mathcal{C}_{\epsilon,\min}(\rho_{c})>0$ and Eq.~(\ref{eq:strong_m1}) implies that we obtain our desired result.
\end{proof}

Note that the first equality in Eq.~(\ref{eq:strong_m2}) arises from the following relation, i.e.,
\begin{equation}
\mathcal{C}_{\epsilon,\min}(|0\rangle\langle 0|\otimes\rho)=\mathcal{C}_{\epsilon,\min}(\rho).
\end{equation}
In fact, for any two states $|0\rangle\langle 0|\otimes\tau$ and $\tau^{\prime}\otimes\tau$ in $B_{\epsilon}(|0\rangle\langle 0|\otimes\rho))$, we have
\begin{equation}
\mathcal{C}(\tau)=\mathcal{C}(|0\rangle\langle 0|\otimes\tau)\leqslant\mathcal{C}(\tau^{\prime}\otimes\tau).
\end{equation}
Thus we have
\begin{align}
\mathcal{C}_{\epsilon,\min}(|0\rangle\langle 0|\otimes\rho)=&\min_{\widetilde{\tau}\in B_{\epsilon}(|0\rangle\langle 0|\otimes\rho)}C(\widetilde{\tau})\nonumber\\
=&\min_{\tau\in B_{\epsilon}(\rho)}C(|0\rangle\langle 0|\otimes\tau)\nonumber\\
=&\min_{\tau\in B_{\epsilon}(\rho)}C(\tau)\nonumber\\
=&\ \mathcal{C}_{\epsilon,\min}(\rho).
\end{align}
In general, we can not obtain that the $\epsilon$-smooth measure of coherence $\mathcal{C}_{\epsilon}$ satisfies the additive, i.e.,
\begin{equation}
\mathcal{C}_{\epsilon,\min}(\rho\otimes\sigma)=\mathcal{C}_{\epsilon,\min}(\rho)+\mathcal{C}_{\epsilon,\min}(\sigma).
\end{equation}

From the above result, according to the resource theory of coherence, although we have known that the $\epsilon$-smooth measure of coherence $\mathcal{C}_{\epsilon,\min}$ is not a proper coherence measure, the following result shows that it satisfies the convexity.
\begin{proposition}
The $\epsilon$ measure of coherence $\mathcal{C}_{\epsilon,\min}$ is convex, i.e.,
\begin{equation}
\mathcal{C}_{\epsilon,\min}\left(\sum_i\lambda_i\rho_i\right)\leqslant \sum_i\lambda_i\mathcal{C}_{\epsilon,\min}(\rho_i),
\end{equation}
where $\lambda_i\geqslant 0$ and $\sum_i\lambda_i=1$.
\end{proposition}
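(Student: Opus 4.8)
The plan is to exploit the two structural hypotheses already in force: the joint convexity of the distance $D$ and the convexity (C4) of the underlying coherence measure $\mathcal{C}$ (recall $\mathcal{C}$ in Definition~\ref{epsilon-m-coh} is a genuine coherence measure). The idea, which parallels the strategy used in the proof of Proposition~\ref{pror:order_p2}, is to build a good competitor state for the mixture $\sum_i\lambda_i\rho_i$ out of the optimal smoothed states of the individual $\rho_i$.

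Concretely, I would first choose for each $i$ a minimizer $\tau_i^*\in B_\epsilon(\rho_i)$ with $\mathcal{C}_{\epsilon,\min}(\rho_i)=\mathcal{C}(\tau_i^*)$ and $D(\rho_i,\tau_i^*)\leqslant\epsilon$; such a minimizer exists since $B_\epsilon(\rho_i)$ is a closed bounded (hence compact) subset of the finite-dimensional state space. Next I would form the candidate $\tau=\sum_i\lambda_i\tau_i^*$ and use joint convexity of $D$ to control its distance from the mixture,
\begin{equation}
D\left(\sum_i\lambda_i\rho_i,\ \sum_i\lambda_i\tau_i^*\right)\leqslant\sum_i\lambda_i D(\rho_i,\tau_i^*)\leqslant\sum_i\lambda_i\epsilon=\epsilon,
\end{equation}
so that $\tau\in B_\epsilon(\sum_i\lambda_i\rho_i)$ is an admissible competitor in the minimization defining $\mathcal{C}_{\epsilon,\min}(\sum_i\lambda_i\rho_i)$. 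Finally, combining the definition of $\mathcal{C}_{\epsilon,\min}$ with convexity (C4) of $\mathcal{C}$,
\begin{equation}
\mathcal{C}_{\epsilon,\min}\left(\sum_i\lambda_i\rho_i\right)\leqslant\mathcal{C}(\tau)\leqslant\sum_i\lambda_i\mathcal{C}(\tau_i^*)=\sum_i\lambda_i\mathcal{C}_{\epsilon,\min}(\rho_i),
\end{equation}
which is exactly the asserted inequality.

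I do not expect any serious obstacle here; the argument is essentially a one-line application of the two convexity properties. The only point worth a moment's attention is the existence of the minimizers $\tau_i^*$: if one prefers not to invoke compactness (for instance if one worries about a non-continuous $\mathcal{C}$), one simply takes, for arbitrary $\eta>0$, states $\tau_i^*$ with $\mathcal{C}(\tau_i^*)\leqslant\mathcal{C}_{\epsilon,\min}(\rho_i)+\eta$ and $D(\rho_i,\tau_i^*)\leqslant\epsilon$, runs the same computation, and lets $\eta\to0$ at the end. Everything else is immediate.
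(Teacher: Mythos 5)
Your proof is correct and follows essentially the same route as the paper: pick minimizers $\tau_i^*\in B_\epsilon(\rho_i)$, use joint convexity of $D$ to place $\sum_i\lambda_i\tau_i^*$ inside $B_\epsilon(\sum_i\lambda_i\rho_i)$, and then apply convexity of $\mathcal{C}$. Your added remark on the existence of minimizers (or the $\eta$-approximation workaround) is a point the paper glosses over, but it does not change the argument.
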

\begin{proof}
Suppose that the states $\tau^*_i\in B_{\epsilon}(\rho_i)$ realize the minimum in $\mathcal{C}_{\epsilon,\min}(\rho_i)$, $i=1,\cdots, n$. For every $i$, then we have
\begin{equation}
\mathcal{C}_{\epsilon}(\rho_i)=\mathcal{C}(\tau^*_i).
\end{equation}
Since the distance $D$ is joint convexity, we have
\begin{equation}
D\left(\sum_i\lambda_i\rho_i,\sum_i\lambda_i\tau^*_i\right)\leqslant \sum\lambda_iD(\rho_i,\tau^*_i)\leqslant \epsilon.
\end{equation}
This implies that the state $\sum_i\lambda_i\tau^*_i\in B_{\epsilon}(\sum_i\lambda_i\rho_i)$.
Thus we have
\begin{align}
\mathcal{C}_{\epsilon,\min}\left(\sum_i\lambda_i\rho_i\right)\leqslant & \ \mathcal{C}\left(\sum_i\lambda_i\tau^*_i\right)\nonumber\\
\leqslant& \ \sum_i\lambda_i\mathcal{C}(\tau^*_i)\nonumber\\
=& \ \sum_i\lambda_i\mathcal{C}_{\epsilon,\min}(\rho_i).
\end{align}
\end{proof}

The $\epsilon$-smooth measure of coherence $\mathcal{C}_{\epsilon,\min}$ is not a proper coherence measure form the axiomatic requirement in~\cite{BaumgratzPRL14},
but the convex property holds such that it is still a good coherence monotone. We also know strong monotonic measures that are convex will satisfy the condition (C2) and do not need to considered separately. In the following we will prove that other possibly relevant
properties of the original quantity $\mathcal{C}$ hold for $\mathcal{C}_{\epsilon,\min}$ too.
\begin{proposition}
The $\epsilon$-smooth measure of coherence $\mathcal{C}_{\epsilon,\min}$ is continuous in $\epsilon$ and $\rho$, for all $
\epsilon$ and for all $\rho$.
\end{proposition}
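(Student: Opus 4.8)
The plan is to extract a quantitative modulus of continuity from the two structural facts already in hand, following the strategy used for the $\epsilon$-measure of entanglement~\cite{MoraNJP08}: Proposition~\ref{pror:order_p}, which controls the change of $\mathcal{C}_{\epsilon,\min}(\rho)$ when only the radius is perturbed, and the triangle inequality for $D$, which lets one relate the $\epsilon$-balls around two nearby states. I will use that every coherence measure of interest is bounded on the state space, $\mathcal{C}(\omega)\leqslant M$ for all $\omega\in\mathcal{D}$ (for instance $M=\log d$ for the relative-entropy measure, $M=d-1$ for the $l_{1}$ measure); this is what makes the estimates uniform.

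Continuity in $\epsilon$ is essentially Proposition~\ref{pror:order_p} itself: for $\epsilon>\epsilon'>0$ it gives $0\leqslant\mathcal{C}_{\epsilon',\min}(\rho)-\mathcal{C}_{\epsilon,\min}(\rho)\leqslant\frac{\epsilon-\epsilon'}{\epsilon}\mathcal{C}(\rho)\leqslant\frac{\epsilon-\epsilon'}{\epsilon}M$, so with $\rho$ fixed the map $\epsilon\mapsto\mathcal{C}_{\epsilon,\min}(\rho)$ is locally Lipschitz, hence continuous, on $\{\epsilon>0\}$; at $\epsilon=0$ it simply equals $\mathcal{C}(\rho)$.

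The substantive step is continuity in $\rho$ for fixed $\epsilon>0$. Put $\delta=D(\rho,\sigma)$ and let $\tau^{*}\in B_{\epsilon}(\sigma)$ realize (or nearly realize, if the infimum is not attained) the minimum in $\mathcal{C}_{\epsilon,\min}(\sigma)$. By the triangle inequality, $D(\rho,\tau^{*})\leqslant D(\rho,\sigma)+D(\sigma,\tau^{*})\leqslant\delta+\epsilon$, so $\tau^{*}\in B_{\epsilon+\delta}(\rho)$ and hence $\mathcal{C}_{\epsilon+\delta,\min}(\rho)\leqslant\mathcal{C}(\tau^{*})=\mathcal{C}_{\epsilon,\min}(\sigma)$. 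Feeding the radii $\epsilon+\delta>\epsilon$ into Proposition~\ref{pror:order_p} gives $\mathcal{C}_{\epsilon,\min}(\rho)-\mathcal{C}_{\epsilon+\delta,\min}(\rho)\leqslant\frac{\delta}{\epsilon+\delta}\mathcal{C}(\rho)$, and adding the last two inequalities yields $\mathcal{C}_{\epsilon,\min}(\rho)\leqslant\mathcal{C}_{\epsilon,\min}(\sigma)+\frac{\delta}{\epsilon+\delta}\mathcal{C}(\rho)$. Swapping $\rho$ and $\sigma$ (using the symmetry of $D$) gives the opposite estimate, so $|\mathcal{C}_{\epsilon,\min}(\rho)-\mathcal{C}_{\epsilon,\min}(\sigma)|\leqslant\frac{D(\rho,\sigma)}{\epsilon+D(\rho,\sigma)}\max\{\mathcal{C}(\rho),\mathcal{C}(\sigma)\}\leqslant\frac{M}{\epsilon}D(\rho,\sigma)$, a Lipschitz bound in $\rho$ with constant $M/\epsilon$.

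Joint continuity at a point $(\epsilon_{0},\rho_{0})$ with $\epsilon_{0}>0$ then follows by splitting $|\mathcal{C}_{\epsilon,\min}(\rho)-\mathcal{C}_{\epsilon_{0},\min}(\rho_{0})|\leqslant|\mathcal{C}_{\epsilon,\min}(\rho)-\mathcal{C}_{\epsilon,\min}(\rho_{0})|+|\mathcal{C}_{\epsilon,\min}(\rho_{0})-\mathcal{C}_{\epsilon_{0},\min}(\rho_{0})|$: for $\epsilon\geqslant\epsilon_{0}/2$ the first term is at most $\frac{2M}{\epsilon_{0}}D(\rho,\rho_{0})$ and the second tends to $0$ by the $\epsilon$-continuity above. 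The main obstacle, and really the only delicate point, is that the $\rho$-Lipschitz constant $M/\epsilon$ degenerates as $\epsilon\to0$, which is exactly the regime where $\mathcal{C}_{\epsilon,\min}$ reverts to the possibly discontinuous $\mathcal{C}$; so the statement is to be read for $\epsilon>0$, the case $\epsilon=0$ being the trivial identity $\mathcal{C}_{0,\min}=\mathcal{C}$. Keeping this $\epsilon$-dependence explicit in all the estimates is what the proof really amounts to.
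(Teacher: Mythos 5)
Your proof is correct, but it reaches the $\rho$-continuity by a different route than the paper. The paper re-runs the convex-mixture construction from scratch: it takes the optimizer $\tau\in B_{\epsilon}(\rho)$, forms $\tau_{\lambda}=\lambda\tau+(1-\lambda)\rho^{\prime}$ with $\lambda=\epsilon/(\epsilon+\eta)$, and uses joint convexity of $D$ together with convexity of $\mathcal{C}$ to place $\tau_{\lambda}$ in $B_{\epsilon}(\rho^{\prime})$ and bound $\mathcal{C}_{\epsilon,\min}(\rho^{\prime})$. You instead factor the estimate through two already-available facts: the triangle inequality gives the ball inclusion $B_{\epsilon}(\sigma)\subseteq B_{\epsilon+\delta}(\rho)$, hence $\mathcal{C}_{\epsilon+\delta,\min}(\rho)\leqslant\mathcal{C}_{\epsilon,\min}(\sigma)$, and Proposition~\ref{pror:order_p} (applied to the radii $\epsilon$ and $\epsilon+\delta$) controls the remaining gap $\mathcal{C}_{\epsilon,\min}(\rho)-\mathcal{C}_{\epsilon+\delta,\min}(\rho)\leqslant\frac{\delta}{\epsilon+\delta}\mathcal{C}(\rho)$. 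Since Proposition~\ref{pror:order_p} is itself proved with the same mixing trick, the underlying mechanism is identical, but your decomposition is more modular and yields a cleaner, uniform Lipschitz constant $M/\epsilon$ (using boundedness of $\mathcal{C}$ on $\mathcal{D}$), whereas the paper's constant $M=\max\{\mathcal{C}(\rho)-\mathcal{C}_{\epsilon,\min}(\rho^{\prime}),\,\mathcal{C}(\rho^{\prime})-\mathcal{C}_{\epsilon,\min}(\rho)\}$ is state-dependent and potentially tighter. You also go slightly beyond the paper in two respects worth keeping: you assemble the two separate continuity statements into joint continuity at $(\epsilon_{0},\rho_{0})$ with $\epsilon_{0}>0$, and you flag explicitly that the modulus degenerates as $\epsilon\to 0$, which the paper leaves implicit. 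One small point: your argument uses the symmetry and triangle inequality of $D$ in an essential way (for the ball inclusion and for swapping $\rho$ and $\sigma$), so it is tied to genuine metrics such as the trace distance; the paper's mixture argument also invokes the triangle inequality in one step, so neither proof covers asymmetric divergences like the relative entropy without modification.
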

\begin{proof}
From Proposition~\ref{pror:order_p}, for fixed $\rho$, we can directly prove that,
\begin{equation}
|\mathcal{C}_{\epsilon,\min}(\rho)-\mathcal{C}_{\epsilon^\prime,\min}(\rho)|\rightarrow 0,
\end{equation}
as $\epsilon^\prime\rightarrow \epsilon$.

Then, we want to prove that for any $\epsilon>0$, and for any state $\rho$,
\begin{equation}
|\mathcal{C}_{\epsilon,\min}(\rho^\prime)-\mathcal{C}_{\epsilon,\min}(\rho)|\rightarrow 0,
\end{equation}
as $D(\rho,\rho^\prime)\rightarrow \epsilon$, for fixed $\epsilon>0$, and $\tau\in B_{\epsilon}(\rho)$ such that $\mathcal{C}_{\epsilon,\min}(\rho)=\mathcal{C}(\tau)$. We denote $\eta=D(\rho^\prime,\rho)$, $\lambda=\epsilon/(\epsilon+\eta)$, and take the state $\tau_\lambda=\lambda\tau+(1-\lambda)\rho^\prime$, since the distance $D$ is convex, and then we have
\begin{align}
D(\rho^\prime,\tau_\lambda)&\leqslant\lambda D(\rho^\prime,\tau)\nonumber\\
&\leqslant \lambda(D(\rho^\prime,\rho)+D(\rho,\rho^\prime))\nonumber\\
&\leqslant\lambda(\epsilon+\eta)=\epsilon.
\end{align}
This implies that the state $\tau_\lambda\in B_\epsilon(\rho^\prime)$, and we have
\begin{align}
\mathcal{C}_{\epsilon,\min}(\rho^\prime)&\leqslant\lambda\mathcal{C}(\tau)+(1-\lambda)\mathcal{C}(\rho^\prime)\nonumber\\
&=\lambda\mathcal{C}_{\epsilon}(\rho)+(1-\lambda)\mathcal{C}_{\epsilon}(\rho^\prime).
\end{align}
Thus we have
\begin{equation}
\mathcal{C}_{\epsilon,\min}(\rho^\prime)-\mathcal{C}_{\epsilon,\min}(\rho)\leqslant \frac{\eta}{\epsilon+\eta}(\mathcal{C}(\rho^\prime,\min)-\mathcal{C}_{\epsilon,\min}(\rho)).
\end{equation}
By exchanging the role of $\rho$ and $\rho^\prime$, we have
\begin{equation}
-\frac{\eta}{\epsilon+\eta}(\mathcal{C}(\rho)-\mathcal{C}_{\epsilon,\min}(\rho^\prime))
\leqslant\mathcal{C}_{\epsilon,\min}(\rho^\prime)-\mathcal{C}_{\epsilon,\min}(\rho)
\end{equation}
Taking $M=\max\{\mathcal{C}(\rho)-\mathcal{C}_{\epsilon,\min}(\rho^\prime),
\mathcal{C}(\rho^\prime)-\mathcal{C}_{\epsilon,\min}(\rho)\}$, we have
\begin{equation}
|\mathcal{C}_{\epsilon,\min}(\rho^\prime)-\mathcal{C}_{\epsilon,\min}(\rho)|\leqslant\frac{\eta}{\epsilon+\eta}M.
\end{equation}
Thus we obtain our desired result.
\end{proof}
This shows that the $\epsilon$-smooth measure of coherence $\mathcal{C}_{\epsilon,\min}$ is always a continuous function of the parameter $\epsilon$ and the state $\rho$, even though the original quantity $\mathcal{C}$ is noncontinuous or no one can presently prove it is continuous, (e.g., Robustness of coherence~\cite{NapoliPRL16} and the coherence number~\cite{XiarXiv18,ChinPRA17},  to our knowledge, no one can prove that robustness of coherence is continuous.). This result holds for any choice of distance $D$ and for all
bounded and convex coherence monotone. This contrasts with the case of entanglement~\cite{MoraNJP08}. We say that a significant advantage of the $\epsilon$ generalization of quantum resource monotones is to allow one to transform noncontinuous quantity into continuous ones.

\section{Distance-based coherent measures}\label{sec:distance}
Let us consider the family of distance-based coherence quantifier, as introduced in~\cite{BaumgratzPRL14,StreltsovRMP17,HuPR18}, defined by
\begin{equation}\label{def:distance coherence}
\mathcal{C}_D(\rho)=\inf_{\delta\in\mathcal{I}}D(\rho,\delta),
\end{equation}
where the infimum is taken over the set of incoherent states $\mathcal{I}$. Clearly, any quantity defined in~(\ref{def:distance coherence}) fulfills condition (C1) for any distance $D$ which is non-negative and zero and only if $\rho=\delta$. The condition (C2) is also satisfied if the distance $D$ is contractive. Any distance-based coherence quantifier fulfills condition (C4) whenever the corresponding distance $D$ is jointly convex. There are already three important distance-based coherence quantifiers: the $l_1$ norm  of coherence~\cite{BaumgratzPRL14}, geometric coherence via the fidelity~\cite{BaumgratzPRL14,XiPRA15,StreltsovPRL15}, AND THE trace distance of coherence~\cite{BaumgratzPRL14,RanaPRA16}.

We know that the definitions of incoherent operations or free operations are not unique and there are different choices~\cite{StreltsovRMP17}. Here we can define a particular incoherent operation $\Lambda^\delta_p$ with any incoherent state $\delta$ and the parameter $p\in[0,1]$, that is,
\begin{equation}\label{def:incoherent operation11}
\Lambda^\delta_p(\rho)=(1-p)\rho+p\delta.
\end{equation}
With this particular incoherent operation at hand, we present an explicit evolutional relation of distance-based coherence quantifier.
\begin{proposition}\label{prop:coh_D_incho_coh_D}
Suppose the distance $D$ is convex and contractive. Given any state $\rho$ and probability $p$, if $\delta^*$ realizes the optimal in~(\ref{def:distance coherence}), then \begin{equation}
\mathcal{C}_D(\Lambda^{\delta^*}_p(\rho))=(1-p)\mathcal{C}_D(\rho).
\end{equation}
\end{proposition}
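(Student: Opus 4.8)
The plan is to prove the identity by establishing the two opposite inequalities; throughout write $\sigma:=\Lambda^{\delta^*}_p(\rho)=(1-p)\rho+p\delta^*$. The bound $\mathcal{C}_D(\sigma)\leqslant(1-p)\mathcal{C}_D(\rho)$ will come from joint convexity of $D$ alone, whereas the reverse bound $\mathcal{C}_D(\sigma)\geqslant(1-p)\mathcal{C}_D(\rho)$ will also use the triangle inequality for $D$.

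For the easy direction, I would note that $\delta^*\in\mathcal{I}$ is an admissible competitor in the infimum defining $\mathcal{C}_D(\sigma)$, so it suffices to estimate $D(\sigma,\delta^*)$. Writing $\sigma=(1-p)\rho+p\delta^*$ and $\delta^*=(1-p)\delta^*+p\delta^*$ and applying joint convexity gives $D(\sigma,\delta^*)\leqslant(1-p)D(\rho,\delta^*)+pD(\delta^*,\delta^*)=(1-p)\mathcal{C}_D(\rho)$, where I use $D(\delta^*,\delta^*)=0$ together with the hypothesis that $\delta^*$ attains the optimum for $\rho$. Hence $\mathcal{C}_D(\sigma)\leqslant(1-p)\mathcal{C}_D(\rho)$.

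For the reverse inequality I would first record the auxiliary estimate $D(\rho,\sigma)\leqslant p\,\mathcal{C}_D(\rho)$, obtained by writing $\rho=(1-p)\rho+p\rho$ and $\sigma=(1-p)\rho+p\delta^*$ and invoking joint convexity exactly as above (the same substitution trick already used in the continuity proof). Then, letting $\delta'\in\mathcal{I}$ attain the optimum for $\sigma$, so that $\mathcal{C}_D(\sigma)=D(\sigma,\delta')$, and using that $\delta'$ is incoherent so $D(\rho,\delta')\geqslant\mathcal{C}_D(\rho)$, the triangle inequality gives the chain $\mathcal{C}_D(\rho)\leqslant D(\rho,\delta')\leqslant D(\rho,\sigma)+D(\sigma,\delta')\leqslant p\,\mathcal{C}_D(\rho)+\mathcal{C}_D(\sigma)$. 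Rearranging yields $\mathcal{C}_D(\sigma)\geqslant(1-p)\mathcal{C}_D(\rho)$, and combining with the first part gives the claimed equality.

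I expect the lower bound to be the delicate step: joint convexity by itself only produces the upper bound, and it is the triangle inequality that makes the estimate close (this is automatic for a genuine metric such as the trace distance, the main example of interest here, whereas for merely convex and contractive functionals violating the triangle inequality the identity need not survive, so the triangle property should be read into the hypothesis). A minor additional point is the case where the optimum for $\sigma$ is not attained by any $\delta'$: there I would instead take a minimizing sequence $\delta'_n$ with $D(\sigma,\delta'_n)\to\mathcal{C}_D(\sigma)$, run the same chain of inequalities, and let $n\to\infty$. The remaining manipulations are routine bookkeeping with joint convexity and the identity $D(\delta^*,\delta^*)=0$.
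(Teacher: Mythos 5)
Your proof is correct and follows essentially the same route as the paper's: the upper bound via joint convexity with $\delta^*$ as the competitor, and the lower bound via the triangle inequality combined with the convexity estimate $D(\rho,\Lambda^{\delta^*}_p(\rho))\leqslant p\,\mathcal{C}_D(\rho)$. Your observation that the triangle inequality is an extra hypothesis beyond ``convex and contractive'' is accurate --- the paper's own proof invokes it as well --- and your minimizing-sequence remark only tightens a point the paper leaves implicit.
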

\begin{proof}
By the convexity of the distance $D$, we have
\begin{align}\label{eq:E_R_coh11}
\mathcal{C}_D(\Lambda^{\delta^*}_p(\rho))&=\inf_{\delta\in\mathcal{I}}D(\Lambda^{\delta^*}_p(\rho),\delta)\nonumber\\
&\leqslant \inf_{\delta\in\mathcal{I}}\left[(1-p)D(\rho,\delta)+p D(\delta^*,\delta)\right]\nonumber\\
&\leqslant(1-p)\mathcal{C}_D(\rho).
\end{align}
On the other hand, by the triangle inequality and the convexity of the distance $D$, we have
\begin{align}\label{eq:E_R_coh}
\mathcal{C}_D(\Lambda^{\delta^*}_p(\rho))&=\inf_{\delta\in\mathcal{I}}D(\Lambda^{\delta^*}_p(\rho),\delta)\nonumber\\
&\geqslant\inf_{\delta\in\mathcal{I}}\left[D(\rho,\delta)-D(\rho,\Lambda^{\delta^*}_p(\rho))\right]\nonumber\\
&\geqslant \inf_{\delta\in\mathcal{I}}D(\rho,\delta)-p D(\rho,\delta^*)\nonumber\\
&=(1-p)\mathcal{C}_D(\rho).
\end{align}
Thus we obtain our desired result.
\end{proof}
We are now in a position to present lower and upper bounds for the $\epsilon$-smooth measure of coherence $\mathcal{C}_{\epsilon}$
via the
original measure $\mathcal{C}$ and the distance-based measure $\mathcal{C}_D$.
\begin{proposition}\label{Prop:lower upper bound}
Let $\mathcal{C}$ be a coherence measure and the distance $D$ be a convex and contractive, then the $\epsilon$ measure of coherence $\mathcal{C}_{\epsilon}$ satisfies
\begin{equation}
\min_{\substack{\tau\  \mathrm{s. t.} \\ \mathcal{C}_D(\tau)=\mathcal{C}_D(\rho)-\epsilon}}\mathcal{C}(\tau)\!\leqslant\!\mathcal{C}_{\epsilon,\min}(\rho)
\!\leqslant\!\left(1\!-\!\frac{\epsilon}{\mathcal{C}_D(\rho)}\right)\mathcal{C}(\rho),
\end{equation}
where $\mathcal{C}_D(\rho)$ is the coherence quantifier associated to the distance $D$.
\end{proposition}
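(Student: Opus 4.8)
The plan is to establish the two bounds separately, each by exhibiting an explicit feasible state in the appropriate $\epsilon$ ball. For the \emph{upper bound}, I would reuse the machinery of Proposition~\ref{prop:coh_D_incho_coh_D}. Let $\delta^*\in\mathcal{I}$ realize the infimum in $\mathcal{C}_D(\rho)$, so that $D(\rho,\delta^*)=\mathcal{C}_D(\rho)$. Set $p=\epsilon/\mathcal{C}_D(\rho)$ (this lies in $[0,1]$ precisely under the standing assumption $\epsilon\leqslant\mathcal{C}_D(\rho)$ noted before Proposition~\ref{pror:order_p}), and consider the state $\tau_p=\Lambda^{\delta^*}_p(\rho)=(1-p)\rho+p\delta^*$. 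Convexity of $D$ gives $D(\rho,\tau_p)\leqslant p\,D(\rho,\delta^*)=\epsilon$, so $\tau_p\in B_{\epsilon}(\rho)$ and hence $\mathcal{C}_{\epsilon,\min}(\rho)\leqslant\mathcal{C}(\tau_p)$. Now bound $\mathcal{C}(\tau_p)$: since $\mathcal{C}$ is a coherence measure it is convex (C4) and vanishes on incoherent states, so $\mathcal{C}(\tau_p)\leqslant(1-p)\mathcal{C}(\rho)+p\,\mathcal{C}(\delta^*)=(1-p)\mathcal{C}(\rho)=(1-\epsilon/\mathcal{C}_D(\rho))\mathcal{C}(\rho)$, which is exactly the right-hand inequality.

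For the \emph{lower bound}, the idea is that \emph{any} $\tau\in B_\epsilon(\rho)$ cannot have its distance-based coherence drop below $\mathcal{C}_D(\rho)-\epsilon$, so the minimization over the ball is bounded below by the minimization over the larger set $\{\tau:\mathcal{C}_D(\tau)\geqslant\mathcal{C}_D(\rho)-\epsilon\}$. Concretely, for $\tau\in B_\epsilon(\rho)$ and any $\delta\in\mathcal{I}$, the triangle inequality gives $D(\rho,\delta)\leqslant D(\rho,\tau)+D(\tau,\delta)\leqslant\epsilon+D(\tau,\delta)$; taking the infimum over $\delta$ yields $\mathcal{C}_D(\rho)\leqslant\epsilon+\mathcal{C}_D(\tau)$, i.e. $\mathcal{C}_D(\tau)\geqslant\mathcal{C}_D(\rho)-\epsilon$. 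Therefore
\begin{equation}
\mathcal{C}_{\epsilon,\min}(\rho)=\min_{\tau\in B_\epsilon(\rho)}\mathcal{C}(\tau)\geqslant\min_{\substack{\tau\ \mathrm{s.\,t.}\\ \mathcal{C}_D(\tau)\geqslant\mathcal{C}_D(\rho)-\epsilon}}\mathcal{C}(\tau).
\end{equation}
The only gap to the stated form is that the proposition writes the constraint as the equality $\mathcal{C}_D(\tau)=\mathcal{C}_D(\rho)-\epsilon$ rather than the inequality $\geqslant$. I would argue these give the same minimum: given any $\tau$ with $\mathcal{C}_D(\tau)=c\geqslant\mathcal{C}_D(\rho)-\epsilon$, form the path $\tau_q=(1-q)\tau+q\delta_\tau$ where $\delta_\tau\in\mathcal{I}$ is optimal for $\tau$; by Proposition~\ref{prop:coh_D_incho_coh_D}, $\mathcal{C}_D(\tau_q)=(1-q)c$ sweeps continuously down from $c$ to $0$, so some $q^*$ hits the value $\mathcal{C}_D(\rho)-\epsilon$ exactly, and by convexity of $\mathcal{C}$, $\mathcal{C}(\tau_{q^*})\leqslant(1-q^*)\mathcal{C}(\tau)\leqslant\mathcal{C}(\tau)$. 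Hence the minimum over the equality constraint is no larger than that over the inequality constraint, and the reverse inclusion is trivial, so the two minima coincide.

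The step I expect to be the genuine subtlety is this last point—reconciling the equality and inequality constraints—since it tacitly requires that the constraint set is nonempty and that the minimum is attained (compactness of $\mathcal{D}$ plus lower semicontinuity of $\mathcal{C}$ and continuity of $\mathcal{C}_D$ take care of attainment). One should also double-check the degenerate case $\mathcal{C}_D(\rho)=0$, i.e. $\rho\in\mathcal{I}$, where both sides are zero and the bound is vacuous but still valid with the convention $\epsilon\leqslant\mathcal{C}_D(\rho)=0$. Everything else is a routine application of convexity, contractivity, and the triangle inequality already invoked in the preceding propositions.
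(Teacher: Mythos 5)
Your proof is correct and follows essentially the same route as the paper: the upper bound via the mixing operation $\Lambda^{\delta^*}_p$ with $p=\epsilon/\mathcal{C}_D(\rho)$, and the lower bound via the triangle-inequality estimate $\mathcal{C}_D(\tau)\geqslant\mathcal{C}_D(\rho)-\epsilon$ combined with the interpolation of Proposition~\ref{prop:coh_D_incho_coh_D} to reach the equality constraint. The only cosmetic difference is that you reduce the inequality-constrained minimum to the equality-constrained one in general, whereas the paper applies the same interpolation directly to the minimizer $\rho'\in B_\epsilon(\rho)$; your explicit attention to attainment and the degenerate case $\rho\in\mathcal{I}$ is a small bonus the paper omits.
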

\begin{proof}
By definition, for any state $\tau\in B_{\epsilon}(\rho)$,
we have
\begin{equation}
\mathcal{C}_{\epsilon,\min}(\rho)\leqslant \mathcal{C}(\tau).
\end{equation}Since the coherence measure $\mathcal{C}$ and the distance $D$ are convex, with the incoherent operation~(\ref{def:incoherent operation11}), we have
\begin{equation}
\mathcal{C}(\Lambda^{\delta}_p(\rho))\leqslant (1-p)\mathcal{C}(\rho)
\end{equation}
 and
\begin{equation}
D(\Lambda^{\delta}_p(\rho),\rho)\leqslant pD(\rho,\delta).
\end{equation}
Since we require $D(\rho,\delta)\geqslant\epsilon$, then one can choice $p$ such that $p\leqslant \epsilon/D(\rho,\delta)$ and we have
$D(\Lambda^{\delta}_p(\rho),\rho)\leqslant\epsilon$. Therefore we obtain
\begin{align}
\mathcal{C}_{\epsilon,\min}(\rho)&\leqslant \min_{\substack{\Lambda^{\delta}_p s. t. \\ D(\Lambda^{\delta}_p(\rho),\rho)\leq \epsilon}}\mathcal{C}(\Lambda^{\delta}_p(\rho))\nonumber\\
&\leqslant\min_{\substack{\Lambda^{\delta}_p s. t. \\ D(\Lambda^{\delta}_p(\rho),\rho)\leq \epsilon}}(1-p)\mathcal{C}(\rho)\nonumber\\
&\leqslant\min_{\delta}\left(1-\frac{\epsilon}{D(\rho,\delta)}\right)\mathcal{C}(\rho)\nonumber\\
&\leqslant \left(1-\frac{\epsilon}{\mathcal{C}_D(\rho)}\right)\mathcal{C}(\rho),
\end{align}
where the minimum in the first and second inequalities are taken over all the incoherent states $\delta$ and the parameter $p$ via the condition $D(\Lambda^{\delta}_p(\rho),\rho)\leqslant \epsilon$, and the third inequality comes from a restriction of the minimum to the case where we fix the parameter $p=\epsilon/D(\rho,\delta)$.

Next we prove the lower bound. Suppose that the state $\rho^\prime\in B_{\epsilon}(\rho)$ realizes the minimum for the quantifier $\mathcal{C}_{\epsilon}$. We then have
\begin{equation}
\mathcal{C}_{\epsilon,\min}(\rho)=\mathcal{C}(\rho^\prime)\geqslant \mathcal{C}(\Lambda^{\delta}_p(\rho^\prime)).
\end{equation}
Without loss of generality, we assume that the incoherent state $\delta^*$ is optimal for $\mathcal{C}_{D}(\rho^\prime)$, and then we obtain
\begin{align}
\mathcal{C}_{D}(\rho^\prime)&=D(\rho^\prime,\delta^*)\nonumber\\
&\geqslant D(\rho,\delta^*)-D(\rho,\rho^\prime)\nonumber\\
&\geqslant \mathcal{C}_{D}(\rho)-\epsilon.
\end{align}
We can take the parameter $s=1-(\mathcal{C}_{D}(\rho)-\epsilon)/\mathcal{C}_{D}(\rho^\prime)$.
Obviously, we have $0\leqslant s\leqslant 1$. From Proposition~\ref{prop:coh_D_incho_coh_D}, we obtain
\begin{equation}
\mathcal{C}_{D}(\Lambda_s^{\delta^*}(\rho^\prime))
=(1-s)\mathcal{C}_{D}(\rho^\prime)=\mathcal{C}_{D}(\rho)-\epsilon.
\end{equation}
Therefore we have
\begin{align}
\mathcal{C}_{\epsilon,\min}(\rho)&=\mathcal{C}(\rho^\prime)\nonumber\\
&\geqslant\mathcal{C}(\Lambda_s^{\delta^*}(\rho^\prime))\nonumber\\
&\geqslant \min_{\substack{\tau\  \mathrm{s. t.} \\ \mathcal{C}_D(\tau)=\mathcal{C}_D(\rho)-\epsilon}}\mathcal{C}(\tau).
\end{align}
Thus we obtain our desired result.
\end{proof}
Note that the lower and upper bound depend on the fundamental properties of the distance, e.g., the symmetric and the triangle inequality. In particular, if one considers the distance-based coherence quantifier $\mathcal{C}_{D}$, from Proposition~\ref{Prop:lower upper bound}, we give the relation between the $\epsilon$-smooth measure of coherence $\mathcal{C}_{\epsilon,\min}$ and the coherence quantifier $\mathcal{C}_{D}$, namely,
\begin{equation}
\mathcal{C}_{D}(\rho)=\mathcal{C}_{\epsilon,\min}(\rho)+\epsilon.
\end{equation}

Note that the relative entropy is not a proper distance because it does not satisfy symmetric and triangle inequality, but the relative entropy of coherence~\cite{BaumgratzPRL14} is also viewed as a distance-based coherence quantifiers, and there are many interesting properties~\cite{BaumgratzPRL14,StreltsovRMP17, WinterPRL16,BuPRL17,XiSR15}. Next we will discuss separately the $\epsilon$ generalization of
relative entropy of coherence. We first introduce the definition of the relative entropy of coherence~\cite{BaumgratzPRL14},
which is defined by
\begin{equation}
\mathcal{C}_{r}(\rho)=\min_{\delta\in\mathcal{I}}S(\rho||\delta),
\end{equation}
where $S(\tau||\delta)=\mathrm{Tr}\rho(\log_2 \rho-\log_2\delta)$ is the relative entropy, and $supp(\rho)\subseteq supp(\sigma)$.
Then we can rewrite definition~\ref{epsilon-m-coh} via the relative entropy as follows.
\begin{definition}\label{epsilon-m-coh-re}
For the relative entropy of coherence $\mathcal{C}_r$, and any $\epsilon\geqslant 0$,
we define
\begin{equation}\label{def:epsilon_REC}
\mathcal{C}_{r,\epsilon}(\rho)=\min_{\tau\in B_{\epsilon}(\rho)}\mathcal{C}_r(\tau),
\end{equation}
where $B_{\epsilon}(\rho)=\{\tau|S(\rho||\tau)\leqslant \epsilon\}$, and we call it the $\epsilon$-smooth relative entropy of coherence.
\end{definition}
We hope that there does not exist the incoherent state in the $\epsilon$ ball around a given coherent state $\rho$; thus, for any state $\tau\in B_{\epsilon}(\rho)$,
we require that $S(\rho||\tau)\leqslant\epsilon\leqslant\mathcal{C}_r(\tau)$.
Clearly, we know that the $\epsilon$-smooth relative entropy of coherence is non-negative and convex,
and it is also nonincreasing under incoherent operation $\Lambda$.
Since the relative entropy of coherence is continuous, it is easy to verify that the $\epsilon$-smooth relative entropy of coherence is also continuous.
We do not prove strong monotonicity, but we can give a weak strong monotonicity.

Suppose that the state $\tau\in B_{\epsilon}(\rho)$ realizes the minimum for the quantifier $\mathcal{C}_{r,\epsilon}(\rho)$. We then have
\begin{equation}\label{eq:SM_1111}
\mathcal{C}_{r,\epsilon}(\rho)=\mathcal{C}_{r}(\tau).
\end{equation}
Let $\rho_k=K_k\rho K_k^\dag/p_k$ with $p_k=\mathrm{Tr}(K_k \rho K_k^\dag)$, and $\tau_k=K_k\tau K_k^\dag/q_k$ with $q_k=\mathrm{Tr}(K_k \tau K_k^\dag)$ for all Kraus operators $\{K_k\}$ with $\sum_k K_k^{\dagger}K_k= I$ and $K_k \mathcal{I} K_k^\dagger\subseteq\mathcal{I}$. From the properties of relative entropy~\cite{VedralPRA98}, we have
\begin{equation}
\sum_kp_kS(\rho_k||\tau_k)\leqslant S(\rho||\tau).
\end{equation}
Since the state $\tau\in B_{\epsilon}(\rho)$, that is, $S(\rho||\tau)\leqslant\epsilon$, then we could require $S(\rho_k||\tau_k)\leqslant\epsilon$ for all $k$,
which implies that the states $\tau_k\in B_{\epsilon}(\rho_k)$. From the definition of the $\epsilon$-smooth relative entropy of coherence, we have
\begin{equation}
\mathcal{C}_{r,\epsilon}(\rho_k)\leqslant\mathcal{C}_{r}(\tau_k).
\end{equation}
Since the relative entropy of coherence is nonincreasing on average under selective incoherent operations, we then have
\begin{equation}\label{eq:SM_2222}
\sum_kq_k\mathcal{C}_{r,\epsilon}(\rho_k)\leqslant\sum_kq_k\mathcal{C}_{r}(\tau_k)\leqslant \mathcal{C}_{r}(\tau).
\end{equation}
Combining Eq.~(\ref{eq:SM_1111}) with Eq.~(\ref{eq:SM_2222}), we obtain
\begin{equation}
\sum_kq_k\mathcal{C}_{r,\epsilon}(\rho_k)\leqslant\mathcal{C}_{r,\epsilon}(\rho).
\end{equation}

Note that if the probabilities $q_k$ are replaced with the probabilities $p_k$, this is a real strong monotonicity. This is the reason we call it weak strong monotonicity.
In particular, if one considers the incoherent operation~(\ref{def:incoherent operation11}), for the $\epsilon$-measure relative entropy of coherence, we have
 \begin{equation}
\mathcal{C}_{r, \epsilon}(\Lambda^{\delta}_p(\rho))\leqslant(1-p)\mathcal{C}_{r,\epsilon}(\rho).
\end{equation}
Further, using the jointly convexity of relative entropy, we have
 \begin{equation}
S(\rho||\Lambda^{\delta}_p(\rho))\leqslant pS(\rho||\delta).
\end{equation}
If we choose the parameter $p=\epsilon/D(\rho,\delta)$, from Proposition~\ref{Prop:lower upper bound} we obtain
 \begin{equation}
\mathcal{C}_{r, \epsilon}(\rho)\leqslant\mathcal{C}_{r}(\rho)-\epsilon.
\end{equation}

\section{The Dual form of The $\epsilon$-smooth measure of coherence}\label{sec:MAX}

We have discussed the smoothed coherence quantifiers with the minimal value in an $\epsilon$ ball, in this section we will consider the maximal value case.
This form is dual to the $\epsilon$-smooth measure of coherence in Eq.~(\ref{epsilon-max-coh}) as the minimum over states in $\epsilon$ ball. This motivation comes from the smoothed minimum and maximum relative entropies of coherence (or entanglement), where there are very significant applications in the one-shot coherence (or entanglement) distillation and dilution
~\cite{BuPRL17,ZhaoPRL18,RegulaPRL18,HsiehJPA18,RegulaPRA18,Winter18,DattaIEEEIT09a,DattaIEEEIT09b,KoingIEEEIT09,DattaIEEEIT11,BuscemiIEEEIT13,WildeIEEEIT17}.
\begin{definition}\label{epsilon-max-coh}
For every coherence measure $\mathcal{C}$, and any $\epsilon\geqslant 0$ and fixed distance $D$, the dual $\epsilon$-smooth measure of coherence is defined as
\begin{equation}
\mathcal{C}_{\epsilon,\max}(\rho)=\max_{\tau\in B_{\epsilon}(\rho)}\mathcal{C}(\tau).
\end{equation}
\end{definition}

Obviously, for $\epsilon=0$, we have $
\mathcal{C}_{0,\max}(\rho)=\mathcal{C}_{0,\min}(\rho)=\mathcal{C}(\rho)$.
 If the parameter $\epsilon$ is so large that there is a maximally coherent state in $B_{\epsilon}(\rho)$, we are likely to get
\begin{equation}
\mathcal{C}_{\epsilon,\max}(\rho)=\log_2d.
\end{equation}
This means that the dual $\epsilon$-smooth measure of coherence is not a coherence monotone, as it indeed violated condition (C1). For a given coherent state $\rho$, we have $\mathcal{C}_{0,\max}(\rho)> 0$. But the converse direction may not be true. Suppose that $|0\rangle$ and $|1\rangle$ are a fixed orthonormal basis in a qubit system, and let $|a\rangle=|0\rangle$ and $|b\rangle=\cos\theta|0\rangle+\sin\theta|1\rangle$, where $\theta\in(0,\frac{\pi}{2})$. Clearly, we have $\mathcal{C}(|a\rangle)=0$ and $\mathcal{C}(|b\rangle)>0$.
Since we have $D_{tr}(|a\rangle,|b\rangle)=|\sin\theta|$, for a given $\epsilon$, by some choices of $\theta$, we have $|\sin\theta|\leqslant\epsilon$. This implies that $|b\rangle\in B_{\epsilon}(|a\rangle)$. Therefore we have
\begin{equation}
\mathcal{C}_{\epsilon,\max}(|a\rangle)\geqslant\mathcal{C}(|b\rangle)>0.
\end{equation}

In addition, from the definition of the dual $\epsilon$-smooth measure of coherence, we cannot prove that this quantifier is monotone under the incoherent operations.
For every incoherent operation $\Lambda$, if one takes
\begin{equation}
B_{\epsilon}(\Lambda(\rho))=\{\Lambda(\tau)|\tau\in B_{\epsilon}(\rho)\},
\end{equation}
within this limitation, we obtain the following result.
\begin{proposition}\label{Hyp porp}
For any incoherent operation $\Lambda$, we have
\begin{equation}\label{Hyp cond}
\mathcal{C}_{\epsilon,\max}(\rho)\geqslant \mathcal{C}_{\epsilon,\max}(\Lambda(\rho)).
\end{equation}
\end{proposition}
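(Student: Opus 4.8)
The plan is to unwind the hypothesis $B_{\epsilon}(\Lambda(\rho))=\{\Lambda(\tau)\,|\,\tau\in B_{\epsilon}(\rho)\}$ directly in the definition of the dual $\epsilon$-smooth measure, and then invoke the monotonicity condition (C2) of the underlying coherence measure $\mathcal{C}$ pointwise. First I would observe that, under this convention, every element $\sigma\in B_{\epsilon}(\Lambda(\rho))$ can be written as $\sigma=\Lambda(\tau)$ for some $\tau\in B_{\epsilon}(\rho)$; that is, $\Lambda$ maps $B_{\epsilon}(\rho)$ onto $B_{\epsilon}(\Lambda(\rho))$ essentially by definition. For any such $\sigma$, applying (C2) to $\mathcal{C}$ and the incoherent operation $\Lambda$ gives $\mathcal{C}(\sigma)=\mathcal{C}(\Lambda(\tau))\leqslant\mathcal{C}(\tau)$, and since $\tau\in B_{\epsilon}(\rho)$ the right side is at most $\max_{\tau'\in B_{\epsilon}(\rho)}\mathcal{C}(\tau')=\mathcal{C}_{\epsilon,\max}(\rho)$.

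Taking the maximum of the left side over all $\sigma\in B_{\epsilon}(\Lambda(\rho))$ then yields
\begin{equation}
\mathcal{C}_{\epsilon,\max}(\Lambda(\rho))=\max_{\sigma\in B_{\epsilon}(\Lambda(\rho))}\mathcal{C}(\sigma)\leqslant\mathcal{C}_{\epsilon,\max}(\rho),
\end{equation}
which is exactly Eq.~(\ref{Hyp cond}). The whole argument is just the pointwise bound $\mathcal{C}\circ\Lambda\leqslant\mathcal{C}$ combined with the surjection $\Lambda\colon B_{\epsilon}(\rho)\to B_{\epsilon}(\Lambda(\rho))$; no property of $D$ (contractivity, triangle inequality, joint convexity) is actually needed here, because the ball on the right has been defined as a pushforward rather than as a metric ball.

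There is no genuine technical obstacle; the real content is the necessity of the special hypothesis. With the natural definition of the $\epsilon$ ball around $\Lambda(\rho)$, namely the metric ball $\{\tau\,|\,D(\Lambda(\rho),\tau)\leqslant\epsilon\}$, contractivity of $D$ only yields the inclusion $\Lambda(B_{\epsilon}(\rho))\subseteq\{\tau\,|\,D(\Lambda(\rho),\tau)\leqslant\epsilon\}$, and this metric ball may contain states that are not of the form $\Lambda(\tau)$ and whose coherence exceeds anything attainable in $B_{\epsilon}(\rho)$. So the only care needed when writing the proof is to state explicitly that it is the restricted ball $B_{\epsilon}(\Lambda(\rho))$ of the displayed hypothesis that is used throughout; this is also the reason $\mathcal{C}_{\epsilon,\max}$ fails to be a coherence monotone in the unrestricted sense.
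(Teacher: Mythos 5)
Your proposal is correct and is essentially the paper's own argument: both rest on writing each (in the paper, the optimal) element of $B_{\epsilon}(\Lambda(\rho))$ as $\Lambda(\tau)$ with $\tau\in B_{\epsilon}(\rho)$ via the pushforward hypothesis, and then applying the pointwise monotonicity $\mathcal{C}(\Lambda(\tau))\leqslant\mathcal{C}(\tau)$ before maximizing. Your closing remark on why the restricted (pushforward) ball is indispensable matches the paper's own caveat preceding the proposition.
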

\begin{proof}
Setting $\Lambda(\tau^*)$ as the optimal state in the definition of $\mathcal{C}_{\epsilon,\max}(\Lambda(\rho))$, we have
\begin{align}
\mathcal{C}_{\epsilon,\max}(\Lambda(\rho))&=\mathcal{C}(\Lambda(\tau^*))\nonumber\\
&\leqslant \mathcal{C}(\tau^*)\nonumber\\
&\leqslant \max_{\tau\in B_{\epsilon}(\rho)}\mathcal{C}(\tau)\nonumber\\
&=\mathcal{C}_{\epsilon,\max}(\rho).
\end{align}
In the first inequality we used that the coherence measure $\mathcal{C}$ is monotone under the incoherent operation $\Lambda$, and the second inequality comes from our hypothesis~(\ref{Hyp cond}), that is, $\tau^*\in B_{\epsilon}(\rho)$.
\end{proof}

The question we were interested in is whether its various forms can play a significant role in the one-shot coherence distillation. Intuitively, the dual $\epsilon$-smooth measure of coherence $\mathcal{C}_{\epsilon,\max}$ is generalization of the $\epsilon$-smooth minimum relative entropy of coherence~\cite{BuPRL17,ZhaoPRL18,RegulaPRL18,HsiehJPA18,RegulaPRA18,Winter18}. If one takes the coherence quantifier $\mathcal{C}$ by the minimum relative entropy and considers the purified distance, then the dual $\epsilon$-smooth measure of coherence reduces to the smoothed minimum relative entropy of coherence~\cite{Winter18}. This quantity has some interesting properties in the manipulation of coherence. It provided an upper bound of one-shot coherence distillation, and in the asymptotic limit the smoothed minimum relative entropy of coherence is equivalent to the relative entropy of coherence~\cite{BuPRL17}.
Then the next definition formalizes the notion of the one-shot coherence distillation under the fixed distance $D$ and any coherence measure $\mathcal{C}$.
\begin{definition}\label{def_one-shot CD}
The one-shot coherence distillation of $\rho$ under the fixed distance $D$, any coherence measure $\mathcal{C}$, and a class of quantum incoherent operations $\Lambda$ is defined as
\begin{equation}
\mathcal{C}_{d,\Lambda}^{(1),\epsilon}(\rho)=\max_{\Lambda}\{c_M:D(\Lambda(\rho),\Psi_M)\leqslant\epsilon\},
\end{equation}
where $\Psi_M$ is a maximally state rank $M$ on Hilbert space, and $|\Psi_M\rangle=\frac{1}{\sqrt{M}}\sum_i|i\rangle$, and $c_M=\mathcal{C}(\Psi_M)$.
\end{definition}

Remarkably, our Definition~\ref{def_one-shot CD} is a generalization of the one-shot coherence distillation in~\cite{BuPRL17,RegulaPRL18,Winter18}. The below result relates the dual $\epsilon$-smooth measure of coherence to the one-shot coherence distillation under quantum incoherent operations. Here, we require quantum incoherent operations satisfy the condition in Proposition~\ref{Hyp porp}.
\begin{proposition}
For any state $\rho$ and any $\epsilon\geqslant 0$, we have
\begin{equation}
\mathcal{C}_{d,\Lambda}^{(1),\epsilon}(\rho)\leqslant \mathcal{C}_{\epsilon,\max}(\rho).
\end{equation}
\end{proposition}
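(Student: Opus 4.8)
The plan is to unwind both definitions and exhibit, for the optimal distillation protocol, a state inside the $\epsilon$ ball around $\rho$ whose coherence is at least $\mathcal{C}_{d,\Lambda}^{(1),\epsilon}(\rho)$. Let $\Lambda$ be an incoherent operation achieving the maximum in Definition~\ref{def_one-shot CD}, so that $\mathcal{C}_{d,\Lambda}^{(1),\epsilon}(\rho)=c_M=\mathcal{C}(\Psi_M)$ with $D(\Lambda(\rho),\Psi_M)\leqslant\epsilon$. The key observation is that $\Psi_M\in B_\epsilon(\Lambda(\rho))$, hence by definition of the dual smooth measure $\mathcal{C}(\Psi_M)\leqslant\mathcal{C}_{\epsilon,\max}(\Lambda(\rho))$.

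First I would chain this with Proposition~\ref{Hyp porp}: since the incoherent operations under consideration are assumed to satisfy the hypothesis $B_\epsilon(\Lambda(\rho))\subseteq\{\Lambda(\tau):\tau\in B_\epsilon(\rho)\}$ needed there, we have $\mathcal{C}_{\epsilon,\max}(\Lambda(\rho))\leqslant\mathcal{C}_{\epsilon,\max}(\rho)$. Combining the two inequalities gives
\begin{equation}
\mathcal{C}_{d,\Lambda}^{(1),\epsilon}(\rho)=\mathcal{C}(\Psi_M)\leqslant\mathcal{C}_{\epsilon,\max}(\Lambda(\rho))\leqslant\mathcal{C}_{\epsilon,\max}(\rho),
\end{equation}
which is exactly the claimed bound. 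Since the displayed chain holds for the optimal $\Lambda$, and the left-hand side is the maximum over all admissible $\Lambda$, the inequality is established.

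The step I expect to be the only real subtlety is making sure the optimal $\Lambda$ actually lies in the restricted class for which Proposition~\ref{Hyp porp} applies; the statement already stipulates ``we require quantum incoherent operations satisfy the condition in Proposition~\ref{Hyp porp}'', so this is by hypothesis rather than something to prove, and I would simply flag it explicitly at the point where Proposition~\ref{Hyp porp} is invoked. A secondary point worth a sentence is that $\Psi_M$ with $|\Psi_M\rangle=\frac{1}{\sqrt M}\sum_i|i\rangle$ is a genuine state of rank $M$ and $\mathcal{C}(\Psi_M)=\log_2 M$ for standard monotones, so $c_M$ is well defined; but the argument does not actually need any closed form for $c_M$, only that it equals $\mathcal{C}(\Psi_M)$. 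Everything else is a two-line concatenation of Definition~\ref{epsilon-max-coh}, Definition~\ref{def_one-shot CD}, and Proposition~\ref{Hyp porp}, with no calculation to grind through.
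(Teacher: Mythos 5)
Your proposal is correct and follows essentially the same route as the paper: identify the optimal $\Lambda^*$, note $\Psi_M\in B_\epsilon(\Lambda^*(\rho))$ so $\mathcal{C}(\Psi_M)\leqslant\mathcal{C}_{\epsilon,\max}(\Lambda^*(\rho))$, then invoke Proposition~\ref{Hyp porp} to pass to $\mathcal{C}_{\epsilon,\max}(\rho)$. Your explicit flagging of the hypothesis needed for Proposition~\ref{Hyp porp} is exactly the caveat the paper itself imposes, so nothing is missing.
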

\begin{proof}
Suppose $\Lambda^*$ is the optimal incoherent operation such that $D(\Lambda^*(\rho),\Psi_M)\leqslant\epsilon$ with $c_M=\mathcal{C}(\Psi_M)=\mathcal{C}_{d,\Lambda}^{(1),\epsilon}(\rho)$, then we have $\Psi_M\in B_{\epsilon}(\Lambda^*(\rho))$. From Proposition~\ref{Hyp porp} we have
\begin{align}
\mathcal{C}_{\epsilon,\max}(\rho)&\geqslant \mathcal{C}_{\epsilon,\max}(\Lambda^*(\rho))\nonumber\\
&=\max_{\Lambda^*(\tau)\in B_{\epsilon}(\Lambda^*(\rho))}\mathcal{C}(\Lambda^*(\tau))\nonumber\\
&\geqslant\mathcal{C}(\Psi_M)\nonumber\\
&=\mathcal{C}_{d,\Lambda}^{(1),\epsilon}(\rho).
\end{align}
Thus we obtain our desired result.
\end{proof}
This shows that the dual $\epsilon$-smooth measure of coherence $\mathcal{C}_{\epsilon,\max}$ can estimate the maximum amount of maximally coherent states created for a reliable coherence distillable protocol.

Similarly, we find that the $\epsilon$-smooth measure of coherence gives a lower bound of the one-shot coherence cost under quantum incoherence operations. We define the one-shot coherence cost of a quantum state $\rho$ as follows.
\begin{definition}\label{def_one-shot CC}
The one-shot coherence cost of $\rho$ under the fixed distance $D$, any coherence measure $\mathcal{C}$, and quantum incoherence operation $\Lambda$ is defined as
\begin{equation}
\mathcal{C}_{c,\Lambda}^{(1),\epsilon}(\rho)=\min_{\Lambda}\{c_M:D(\rho,\Lambda(\Psi_M))\leqslant\epsilon\}.
\end{equation}
\end{definition}
Remarkably, our Definition~\ref{def_one-shot CC} is a generalization of the one-shot coherence cost in~\cite{BuPRL17,ZhaoPRL18}, where they consider the fidelity as a distance. Then the following result relates the $\epsilon$-smooth measure of coherence to the one-shot coherence coherence under quantum incoherent operations.
\begin{proposition}
For any state $\rho$ and any $\epsilon\geqslant 0$, we have
\begin{equation}
\mathcal{C}_{\epsilon,\min}(\rho)\leqslant\mathcal{C}_{c,\Lambda}^{(1),\epsilon}(\rho).
\end{equation}
\end{proposition}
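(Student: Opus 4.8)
The plan is to mirror the argument used above for the one-shot coherence distillation bound, but running the inequalities in the opposite direction, so that the two ingredients are again just $\epsilon$-ball membership and monotonicity under incoherent operations. First I would let $\Lambda^*$ be an incoherent operation realizing the minimum in Definition~\ref{def_one-shot CC}, so that $D(\rho,\Lambda^*(\Psi_M))\leqslant\epsilon$ and $c_M=\mathcal{C}(\Psi_M)=\mathcal{C}_{c,\Lambda}^{(1),\epsilon}(\rho)$. The feasibility constraint says precisely that the prepared state $\Lambda^*(\Psi_M)$ lies in the $\epsilon$ ball around $\rho$, i.e., $\Lambda^*(\Psi_M)\in B_{\epsilon}(\rho)$.

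Next, by the definition of the $\epsilon$-smooth measure of coherence as a minimum over all states in $B_{\epsilon}(\rho)$ (Definition~\ref{epsilon-m-coh}), membership of $\Lambda^*(\Psi_M)$ in this ball immediately gives
\[
\mathcal{C}_{\epsilon,\min}(\rho)\leqslant\mathcal{C}(\Lambda^*(\Psi_M)).
\]
Then, since $\Lambda^*$ is an incoherent operation and the underlying coherence measure $\mathcal{C}$ satisfies monotonicity (C2), we have $\mathcal{C}(\Lambda^*(\Psi_M))\leqslant\mathcal{C}(\Psi_M)=c_M$. Chaining these two inequalities yields
\[
\mathcal{C}_{\epsilon,\min}(\rho)\leqslant\mathcal{C}(\Lambda^*(\Psi_M))\leqslant c_M=\mathcal{C}_{c,\Lambda}^{(1),\epsilon}(\rho),
\]
which is the claimed bound.

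There is essentially no hard step here: the statement is a direct consequence of combining $\epsilon$-ball membership with monotonicity under incoherent operations, exactly the two ingredients already used for the dual distillation statement. The only point worth checking carefully is the ordering of the arguments of $D$: the feasibility condition $D(\rho,\Lambda(\Psi_M))\leqslant\epsilon$ in Definition~\ref{def_one-shot CC} places $\rho$ in the first slot and $\Lambda(\Psi_M)$ in the second, which matches the convention $B_{\epsilon}(\rho)=\{\tau\mid D(\rho,\tau)\leqslant\epsilon\}$ of Definition~\ref{epsilon-m-coh}, so no symmetry of $D$ is invoked; had the slots been reversed one would instead need symmetry of $D$ (or the corresponding variant of $B_{\epsilon}$).
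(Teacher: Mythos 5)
Your proposal is correct and follows exactly the paper's own argument: take the optimal $\Lambda^*$, observe that the feasibility condition places $\Lambda^*(\Psi_M)$ in $B_{\epsilon}(\rho)$, apply the definition of $\mathcal{C}_{\epsilon,\min}$ and then monotonicity (C2) of $\mathcal{C}$ under incoherent operations. The remark about the ordering of the arguments of $D$ is a sensible sanity check but adds nothing beyond what the paper already does implicitly.
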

\begin{proof}
Suppose $\Lambda^*$ is the optimal incoherent operation such that $D(\rho,\Lambda^*(\Psi_M))\leqslant\epsilon$ with $c_M=\mathcal{C}(\Psi_M)=\mathcal{C}_{c,\Lambda}^{(1),\epsilon}(\rho)$. This means that $\Lambda^*(\Psi_M)\in B_{\epsilon}(\rho)$, and then we have
\begin{align}
\mathcal{C}_{\epsilon,\min}(\rho)& \leqslant \mathcal{C}(\Lambda^*(\Psi_M))\nonumber\\
& \leqslant \mathcal{C}(\Psi_M)\nonumber\\
&=\mathcal{C}_{c,\Lambda}^{(1),\epsilon}(\rho).
\end{align}
Thus we obtain our desired result.
\end{proof}
This shows that the $\epsilon$-smooth measure of coherence $\mathcal{C}_{\epsilon,\min}$ provides a lower bound of the minimum amount of maximally coherent states needs for faithful coherence cost (or dilution).
\section{Conclusions}\label{sec:conclusion}
We have introduced the $\epsilon$-smooth measure of coherence via the original coherence quantifier.
The $\epsilon$-smooth measure of coherence of a given quantum state could have been interpreted as the
minimum guaranteed coherence contained in the actually prepared state when we
only know that it is $\epsilon$ close to the ideal target state. We have shown that the $\epsilon$-smooth measure of any coherence monotone is
still a coherence monotone. We have proved that the $\epsilon$-smooth measure of coherence quantifier does not
satisfy monotonicity on average under incoherent operations. In particular, we found that the $\epsilon$ measure of relative entropy of coherence
satisfies a weak monotonicity on average under incoherent operations.
We also proved that the $\epsilon$ measure of a convex coherence monotone is continuous, and thus it can also be seen as a smooth version of the original non-continuous quantity. We have given a lower and upper bound of the $\epsilon$-smooth measure of distance-based coherence quantifier. We have discussed the dual form of the $\epsilon$-smooth measure of coherence and have shown that it provides an upper bound of the one-shot coherence distillation. Similarly, we also got that the $\epsilon$-smooth measure of coherence gives an lower bound of the one-shot coherence cost.

We believe that the results obtained will help to understand quantum resource theory. As potential next steps, our results could be extended to an infinite-dimensional system or continuous setting. We can also discuss the regularized version of the $\epsilon$-smooth measure of coherence and its applications in quantum information processing.
\section{Acknowledgments}
We acknowledge X.-M. Lu for the insightful discussions.
Z. Xi is supported by the National Natural Science Foundation of China (Grants No. 61671280, No. 11531009, and No, 11771009), the Natural
Science Basic Research Plan in Shaanxi Province of China
(No. 2017KJXX-92),
 the Fundamental Research Funds for the
Central Universities, and by the Funded Projects for the Academic Leaders and Academic Backbones, Shaanxi Normal University (16QNGG013).

\end{document}